\newtheorem{thm}{Theorem}
\newtheorem{prop}{Proposition}
\newtheorem{lem}{Lemma}
\newtheorem{cor}{Corollary}
\newcommand{\fado}{{\bf F{\cal A}do}}
\newcommand{\lixo}[1]{}
\newcommand{\ICDFA}{\textsc{ICDFA}}
\newcommand{\ICDFAE}{$\mbox{\textsc{ICDFA}}_\emptyset$}
\newcommand{\DFAE}{$\mbox{\textsc{DFA}}_\emptyset$}
\newcommand{\DFA}{\textsc{DFA}}
\newcommand{\floor}[1]{\lfloor #1 \rfloor}
\newcommand{\djk}[2]{(\floor{#1/#2},\sigma_{#1\bmod{#2}})}
\begin{document}

\title{On the Representation of Finite Automata \thanks{Work partially
    funded by Funda\c{c}\~ao para a Ci\^encia e Tecnologia (FCT) and
    Program POSI.}\footnote{This paper was presented at the 7th
    Workshop on Descriptional  Complexity of Formal Systems, DCFS'05}}

\author{Rogério Reis\\
  {\tt rvr@ncc.up.pt}\\
  \and Nelma Moreira
  \thanks{Corresponding author}\\
  {\tt nam@ncc.up.pt}\\
  \and
  Marco Almeida\\
  {\tt mfa@alunos.dcc.fc.up.pt}\\
  \and DCC-FC \ \& LIACC,  Universidade do Porto \\
  R. do Campo Alegre 823, 4150 Porto, Portugal} \date{}

\maketitle

\begin{abstract}
  We give an unique string representation, up to isomorphism, for
  initially connected deterministic finite automata (\ICDFA{}'s) with
  $n$ states over an alphabet of $k$ symbols.  We show how to generate
  all these strings for each $n$ and $k$, and how its enumeration
  provides an alternative way to obtain the exact number of
  \ICDFA{}'s.
\end{abstract}
\section{Motivation}
In symbolic manipulation environments for finite automata, it is
important to have an adequate representation of automata and,
dependent upon their use, several representations may be available.
For example, for testing if two finite automata are isomorphic objects
or for (random) generation of automata, the representation must be
compact and somehow canonical. In the \fado{}
project~\cite{moreira05:_inter_manip_regul_objec_fado,fado} a
canonical form is used to test if two minimal \DFA{}'s are
\emph{isomorphic} (i.e are the same up to renaming of states). In this
paper we prove the correctness of that representation and show how it
can be used for the exact enumeration and generation of initially
connected deterministic finite automata (\ICDFA{}).
The problem of enumeration of finite automata was considered by
several authors since early 1960s, in particular see
Robinson~\cite{robinson85:_count}, Harary and Palmer
\cite{harary67:_enumer}\lixo{harary73:_graph_enumer} and
Liskovets~\cite{liskovets69} amongst many others. A survey may be
found in Domaratzki et al.~\cite{domaratzki02}. More recently, several
authors examined related problems. Domaratzki et
al.~\cite{domaratzki02} studied the \lixo{(exact and asymptotic)
}enumeration of distinct languages accepted by finite automata with
$n$ states; Nicaud~\cite{nicaud99:_averag}, Champarnaud and
Parantho\"en~\cite{champarnaud:_random_gener_dfas,paranthoen04} and
Bassino and Nicaud~\cite{bassino:_enumer} analysed several aspects of
the average behaviour of regular languages;
Liskovets~\cite{liskovets03:_exact} and
Domaratzki~\cite{domaratzki04:_combin_inter_gener_genoc_number} gave
(exact and asymptotic) enumerations of acyclic \DFA{}'s and of finite
languages.
The paper is organised as follows. In the next section, we review some
basic notions and introduce some notation.
Section~\ref{sec:repr-towards-norm} describes a string representation
for deterministic finite automata that is unique up to isomorphism for
initially connected deterministic finite automata.
Section~\ref{sec:generating-automata} presents an efficient method to
generate those strings. Section~\ref{sec:enumeration} shows how their
enumeration provides an upper bound and the exact value for the number
of \ICDFA{}'s.
Section~~\ref{sec:implementation} concludes with some final remarks.
We address the reader attention to the longer version of this paper
for some implementation issues and experimental
results\footnote{http://www.dcc.fc.up.pt/Pubs/TR05/dcc-2005-04.ps.gz}.
\section{Preliminaries}
We first recall some basic notions from automata theory and formal
languages, that can be found in standard
books~\cite{hopcroft00:_introd_autom_theor_languag_comput}.  An
alphabet $\Sigma$ is a nonempty set of symbols. A string over $\Sigma$
is a finite sequence of symbols of $\Sigma$. The empty string is
denoted by $\epsilon$. The set $\Sigma^\star$ is the set of all
strings over $\Sigma$. A language $L$ is a subset of
$\Sigma^\star$. The density of a language L over $\Sigma$,
$\rho_{L}(n)$, is the number of strings of length $n$ that are in $L$,
i.e., $\rho_{L}(n)=|L\cap \Sigma^{n} |$.
If $L_1,L_2\subseteq \Sigma^\star$, $L_1 L_2=\{ xy\mid x\in L_1\text{
  and } y\in L_2\}$.
A regular expression (r.e.)  $\alpha$ over $\Sigma$ represents a
language $L(\alpha)\subseteq \Sigma^\star$ and is inductively defined
by: $\emptyset$, $\epsilon$ and $\sigma\in \Sigma$ are a r.e., where
$L(\emptyset)=\emptyset$, $L(\epsilon)=\{\epsilon\}$ and
$L(\sigma)=\{\sigma\}$; if $\alpha_1$ and $\alpha_2$ are r.e.,
$(\alpha_1 + \alpha_2)$, $(\alpha_1 \alpha_2)$ and $\alpha_1^\star$
are r.e., respectively with $L((\alpha_1 + \alpha_2))=L(\alpha_1)\cup
L(\alpha_2)$, $L((\alpha_1\alpha_2))=L(\alpha_1)L(\alpha_2)$ and
$L({\alpha_1}^\star)=L(\alpha_1)^\star$. In this paper, we will use
regular expressions to represent descriptions of finite automata.  A
\emph{deterministic finite automaton} (\DFA{}) ${\cal A}$ is a
quintuple $(Q,\Sigma,\delta,q_0,F)$ where $Q$ is a finite set of
states, $\Sigma$ is the alphabet, $\delta: Q \times \Sigma \rightarrow
Q$ is the transition function, $q_0$ the initial state and $F\subseteq
Q$ the set of final states. We assume that the transition function is
total, so we consider only \emph{complete} \DFA{}'s.  The \emph{size}
of a \DFA{} is the number of its states, $|Q|$. Normally, we are not
interested in the labels of the states and we can represent them by an
integer $0\leq i < |Q|$.  The transition function $\delta$ extends
naturally to $\Sigma^\star$: for all $q\in Q$, if $x=\epsilon$ then
$\delta(q,\epsilon)=q$; if $x=y\sigma$ then
$\delta(q,x)=\delta(\delta(q,y),\sigma)$.  A \DFA{} is \emph{initially
  connected}\footnote{Also called \emph{accessible}.}  (\ICDFA{}) if
for each state $q\in Q$ there exists a string $x\in \Sigma^\star$ such
that $\delta(q_0,x)=q$.  Two \DFA{}'s ${\cal
  A}=(Q,\Sigma,\delta,q_0,F)$ and ${\cal
  A}'=(Q',\Sigma,\delta',q_0',F')$ are called \emph{isomorphic} (by
states) if there exists a bijection $f:Q \rightarrow Q'$ such that
$f(q_0)=q_0'$ and for all $\sigma\in \Sigma$ and $q\in Q$,
$f(\delta(q,\sigma))=\delta'(f(q),\sigma)$.  Furthermore, for all
$q\in Q$, $q\in F$ if and only if $f(q)\in F'$.  The \emph{language}
accepted by a \DFA{} ${\cal A}$ is $L({\cal A})=\{x\in
\Sigma^\star\mid \delta(q_0,x)\in F\}$. Two \DFA{} are
\emph{equivalent} if they accept the same language. Obviously, two
isomorphic automata are equivalent, but two non-isomorphic automata
may be equivalent. A \DFA{}\ ${\cal A}$ is \textsl{minimal} if there
is no \DFA{}\ ${\cal A}'$ with fewer states equivalent to ${\cal
  A}$. Trivially a minimal \DFA{} is an \ICDFA{}. Minimal \DFA{}'s are
unique up to isomorphism. We are mainly concerned with the
representation of the transition function of \DFA{}'s
, so we disregard the set of final states and we consider only a
quadruple $(Q,\Sigma,\delta,q_0)$ called the \emph{structure} of an
automaton and referred as \DFAE. For each of our representations,
there will be $2^n$ \DFA{}'s. We denote by \ICDFAE{} the structure of
an \ICDFA{}.  We consider that any integer variable has always a
nonnegative value (if not otherwise stated). Let
$[n]_0=\{0,1,\ldots,n\}$ and $[n]=\{1,\ldots,n\}$.

\section{Representations towards a normal form}
\label{sec:repr-towards-norm}
The method used to represent a \DFA{} has a significative role in the
amount of computer work needed to manipulate that information, and can
give an important insight about this set of objects, both in its
characterisation and enumeration. Let us disregard
the set of \textit{final states} of a \DFA{}.  A \textit{naive}
representation of a \DFAE{} can be obtained by the enumeration of its states
and for each state a list of its transitions for each symbol. For the
\DFAE{} in Fig.\ref{fig:DFA1} we have:
  \begin{multline}
    [[A\; (\mathtt{a}:A, \mathtt{b}:B)],\,[B\;
    (\mathtt{a}:A,\mathtt{b}:E)], [C\;(\mathtt{a}:B, \mathtt{b}:E)],\\
    [D\;(\mathtt{a}:D, \mathtt{b}:C)], [E\;(\mathtt{a}:A,
    \mathtt{b}:E)]].
\label{eq:1}
  \end{multline}
\begin{figure}[h]
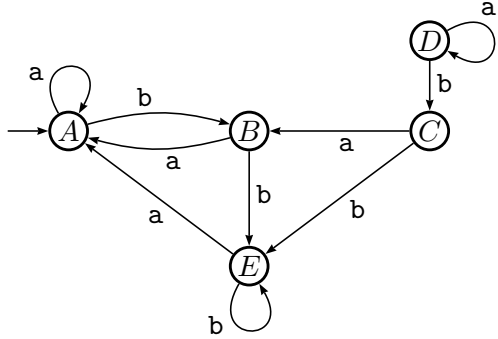

  \begin{center}
    \VCDraw{
      \begin{VCPicture}{(0,0)(12,6)}
        \State[A]{(2,4)}{A} \State[B]{(6,4)}{B} \State[D]{(10,6)}{C}
        \State[C]{(10,4)}{D} \State[E]{(6,1)}{E} \Initial{A}
        \LoopN{A}{\texttt{a}} \ArcL{A}{B}{\texttt{b}}
        \ArcL{B}{A}{\texttt{a}} \EdgeL{B}{E}{\texttt{b}}
        \LoopE{C}{\texttt{a}} \EdgeL{C}{D}{\texttt{b}}
        \EdgeL{D}{B}{\texttt{a}} \EdgeL{D}{E}{\texttt{b}}
        \LoopS{E}{\texttt{b}} \EdgeL{E}{A}{\texttt{a}}
      \end{VCPicture}}
  \end{center}\centering
  \caption{A \DFA{} with no final states marked}
  \label{fig:DFA1}
\end{figure}

Given a complete \DFAE{} $(Q,\Sigma,\delta,q_0)$ with $|Q|=n$ and
$|\Sigma|=k$ and considering a total order over $\Sigma$, the
representation can be simplified by omitting the alphabetic symbols.
For our example, we would have
\begin{equation}
  \label{eq:2}
  [[A\;(A,B)],[B\;(A,E)],[C\;(B,E)],[D\;(D,C)],[E\;(A,E)]].
\end{equation}

The labels chosen for the states have a standard order (in the
example, the alphabetic order). We can simplify the representation a
bit if we use that order to identify the states, and because we are
representing complete \DFAE{}'s we can drop the inner tuples as
well. We obtain
\begin{equation}
  \label{eq:3}
  [0,1,0,4,1,4,3,2,0,4].
\end{equation}

Because this representation depends on the order we label the states,
we have more than one representation for each \DFAE{}. Can we have a
canonical order for the set of the states? Let the first state be the
initial state $q_0$ of the automaton, the second state the first one
to be referred (excepting $q_0$) by a transition from $q_0$, the third
state the next referred in transitions from one of the first two
states, and so on... For the \DFAE{} in the example, this method
induces an unique order for the first three states ($A,B,E$), but then
we can arbitrate an order for the remaining states ($C,D$). Two
different representations are thus admissible:
\begin{equation}
  [0,1,0,2,0,2,3,4,1,2]\text{ and } [0,1,0,2,0,2,1,2,4,3].
\end{equation}
If we restrict this representation to \ICDFAE{}'s, then this
representation is unique and defines an order over the set of its
states. In the example, the \DFAE{} restricted to  the set of states $\{A,B,E\}$ 
is represented by $[0,1,0,2,0,2]$.
Let $\Sigma = \{\sigma_i\mid i<k\}$, with
$\sigma_0<\sigma_1<\cdots<\sigma_{k-1}$.  Given an \ICDFAE{}
$(Q,\Sigma,\delta,q_0)$ with $|Q|=n$, the representing string is of
the form $[(S_i)_{i<kn}]$ with $S_i\in[n-1]_0$ and
$S_i=\delta(\floor{i/k},\sigma_{i\bmod{k}})$.
\begin{lem}\label{lem:ICDFA1}
  Let $[(S_i)_{i<kn}]$ be a representation of a complete \ICDFAE{}
  $(Q,\Sigma,\delta,q_0)$ with $|Q|=n$ and $|\Sigma|=k$, then:
  \begin{gather}
      (\forall m>1)(\forall i)(S_i=m\;\Rightarrow\;((\exists
      j<i)\,S_j=m-1))\tag{\textbf{R1}}\label{eq:4}\\
      (\forall m \in [n-1])((\exists j<km)\,S_j=m)\tag{\textbf{R2}}
    \end{gather}
\end{lem}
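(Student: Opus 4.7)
The plan is to exploit the canonical ordering convention: state $0$ is $q_0$, and for $m\geq 1$, state $m$ is the $m$-th distinct state (other than $q_0$) to appear as a value in the sequence $S_0,S_1,\ldots$ read left to right. Let me denote by $\iota(m)=\min\{i : S_i=m\}$ the position of the first occurrence of label $m$ in the string. The labeling convention is then equivalent to saying that $\iota(1)<\iota(2)<\cdots<\iota(n-1)$. I would start the proof by making this observation explicit and reducing both \textbf{R1} and \textbf{R2} to statements about $\iota$.

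For \textbf{R1}, suppose $S_i=m$ with $m>1$. Then $i\geq\iota(m)$, and since $\iota(m-1)<\iota(m)\leq i$, the index $j=\iota(m-1)$ is strictly less than $i$ and satisfies $S_j=m-1$, as required. This is the easy half and amounts essentially to unfolding the definition of the labeling.

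For \textbf{R2}, I would argue by contraposition using initial connectedness, which is the main obstacle. Assume for some $m\in[n-1]$ that $S_j\neq m$ for every $j<km$. By the ordering of first occurrences just noted, this would also force $S_j\notin\{m,m+1,\ldots,n-1\}$ for all $j<km$, because no label exceeding $m$ can appear before label $m$ itself. Now observe that the indices $j<km$ are exactly those for which $\floor{j/k}<m$, so the positions $S_0,\ldots,S_{km-1}$ encode precisely the transitions $\delta(q,\sigma)$ with $q\in\{0,\ldots,m-1\}$ and $\sigma\in\Sigma$. Under the assumption, all such transitions stay inside $\{0,\ldots,m-1\}$, so this set is closed under $\delta$ and contains $q_0=0$. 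Hence no state in $\{m,\ldots,n-1\}$ is reachable from $q_0$, contradicting the hypothesis that the automaton is initially connected. Therefore $\iota(m)<km$, witnessing \textbf{R2}.

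The technical point to be careful about is the equivalence ``$S_j\neq m$ for $j<km$ implies $S_j<m$ for $j<km$'', which relies on the key fact that labels are assigned strictly in order of first appearance; I would state and use this fact as an auxiliary observation right after defining $\iota$ so that both parts of the lemma reduce to clean arguments about the first-occurrence indices.
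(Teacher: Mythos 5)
Your proof is correct, and part~\textbf{R2} is organized differently from the paper's. The paper also argues by contradiction from initial connectedness, but it traces an explicit path $m_0=0,\ldots,m_l=m$ witnessing reachability of $m$, shows the path must at some step jump from a state below $m$ to a state above $m$, and derives a contradiction with \textbf{R1} (a label greater than $m$ would then appear among $S_0,\ldots,S_{km-1}$ before $m$ does). You instead turn the same ingredients into a closure argument: if $m$ does not appear before position $km$, then by the first-occurrence ordering no label $\geq m$ does, so $\{0,\ldots,m-1\}$ is closed under $\delta$ and contains $q_0$, whence $m$ is unreachable --- contradicting connectedness directly rather than contradicting \textbf{R1}. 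Your version is arguably cleaner: it avoids the slightly delicate selection of the crossing index $l'$ (where one must also rule out the path hitting $m$ early), and it isolates the one nontrivial fact --- ``$S_j\neq m$ for $j<km$ forces $S_j<m$ for $j<km$'' --- as an explicit consequence of the labeling convention $\iota(1)<\iota(2)<\cdots<\iota(n-1)$, which is also exactly what makes your half-line proof of \textbf{R1} go through. The two arguments use the same facts and the same amount of machinery; yours trades the path-combinatorics for an invariance statement.
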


\begin{proof}
The condition~\textbf{R1} establishes that a state label (greater than
$1$) can only occur after the occurrence of its predecessors. This is
a direct consequence of the way we defined the representing string.
Suppose \textbf{R2} does not verify, thus there exists a state $m$
that does not occur in the first $km$ symbols of the string (the $m$
first state descriptions). Because the automaton is initially
connected there must be a sequence of states $(m_i)_{i\leq l}$ and
symbols $(\sigma_i)_{i\leq l}$ such that $m_0=0,\; m_l=m$ and
$\delta(m_i,\sigma_i) = m_{i+1}$ for $i<l$. We must have $0<m<m_{l-1}$
because $m$ appears in the $m_{l-1}$ description and we supposed no
occurrences of $m$ in the first $m$ state descriptions.  There must
exist $l'<l$ such that $m_{l'-1}<m<m_{l'}$, implying that $m_ {l'}\in
\{S_i\,\mid\,i<km\}$. This contradicts \textbf{R1} because we are
supposing that $m\not\in\{S_i\,\mid\,i<km\}$ and $m<m_{l'}$. Thus
\textbf{R2} is verified.
\end{proof}
Note that the conditions \textbf{R1} and \textbf{R2} are
independent. For $k=2$ and $n=3$, the string $[2,1,0,0,1,0]$ satisfies
\textbf{R2} but not \textbf{R1}, and the opposite occurs for the
string $[0,0,1,1,0,2]$.

\begin{lem}\label{lem:ICDFA2}
  Every string $[(S_i)_{i<kn}]$ with $S_i\in[n-1]_0$ satisfying
  \textbf{R1} and \textbf{R2} represents a complete \ICDFAE{}{} with
  $n$ states over an alphabet of $k$ symbols.
\end{lem}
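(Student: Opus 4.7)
Plan: To prove the converse of Lemma~\ref{lem:ICDFA1}, I would first use the string to explicitly build a candidate structure and then verify the two nontrivial properties: completeness (easy) and initial connectivity (where \textbf{R2} does the work). Define $\mathcal{A}=(Q,\Sigma,\delta,0)$ with $Q=[n-1]_0$ and, for every index $i<kn$, set $\delta(\floor{i/k},\sigma_{i\bmod k})=S_i$. Since the map $(q,\sigma_j)\mapsto qk+j$ is a bijection between $Q\times\Sigma$ and $\{0,\dots,kn-1\}$, and every $S_i$ lies in $Q$, this gives a total, well-defined transition function; in particular $\mathcal{A}$ is a complete \DFAE{} with $n$ states over a $k$-letter alphabet.

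Next I would show $\mathcal{A}$ is initially connected by induction on the state label $m\in[n-1]_0$, proving that $m$ is reachable from $q_0=0$. The base $m=0$ is immediate via $\epsilon$. For $m\geq 1$, condition \textbf{R2} supplies an index $j<km$ with $S_j=m$; writing $j=q'k+j'$ with $q'=\floor{j/k}<m$ and $j'=j\bmod k$, this reads $\delta(q',\sigma_{j'})=m$. The crucial inequality $q'<m$ lets the induction hypothesis apply to $q'$, so $q'$ is reachable from $0$, and one further transition on $\sigma_{j'}$ reaches $m$. Hence all states are reachable and $\mathcal{A}$ is an \ICDFAE{}.

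Finally, to justify that the given string \emph{is} the representation of $\mathcal{A}$ in the canonical order introduced before Lemma~\ref{lem:ICDFA1} (initial state first, subsequent states indexed in the order of first appearance), I would invoke \textbf{R1}. A short induction on $m$ shows that under \textbf{R1} the first occurrence of the label $m$ in $(S_i)_{i<kn}$ must be preceded by the first occurrence of $m-1$: otherwise at the position where $m$ first appears, no earlier $S_j$ equals $m-1$, contradicting \textbf{R1}. Therefore the labels $0,1,\dots,n-1$ appear for the first time in this very order, matching the canonical ordering rule, so the canonical representation of the constructed $\mathcal{A}$ coincides with the given string.

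The main obstacle is the induction for initial connectivity: one must exploit precisely the strict bound $j<km$ in \textbf{R2} (rather than $j<kn$) so that the predecessor $\floor{j/k}$ is strictly smaller than $m$ and the induction terminates; without this strict bound the argument is circular. The role of \textbf{R1} is comparatively light, serving only to align the state numbering with the canonical first-appearance order.
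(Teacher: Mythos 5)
Your proof is correct and follows essentially the same route as the paper: construct the \DFAE{} directly from the string, then prove reachability of each state $m$ by induction, using the strict bound $j<km$ in \textbf{R2} so that $\lfloor j/k\rfloor<m$ and the induction hypothesis applies. The only difference is cosmetic and lies in the final step: the paper verifies $S_j=S'_j$ position by position by induction using \textbf{R1}, whereas you argue that \textbf{R1} forces the labels to first appear in increasing order, so the canonical first-appearance numbering is the identity relabeling; the two arguments amount to the same thing.
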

\begin{proof}
  Let $[(S_i)_{i<kn}]$ be a string in the referred conditions, and
  consider the associated automaton ${\cal A}$ using the string
  symbols as labels for the corresponding states. By its construction,
  ${\cal A}$ is a
  \DFAE{}.
  We only need to prove that it is initially connected.  Let $m$ be a
  state of the automaton.
  A proof that $m$ is reachable from the initial state $0$ can be done
  by induction on $m$.
  If $m=0$ there is nothing to prove.  If $m=1$ then, by \textbf{R2},
  $1$ must occur in the description of state $0$, making state $1$
  reachable from state $0$.
  Let us suppose that every state $m'<m$ is reachable from state $0$
  and prove that state $m$ is reachable too.  By \textbf{R2}, $m$
  occurs at least once before position $km$, say in position $km'+i$
  with $m'<m$ and $i<k$. Then for some symbol $\sigma$,
  $\delta(m',\sigma) = m$. By induction hypothesis, state $m'$ is
  reachable from state $0$, thus state $m$ is reachable too and the
  automaton is initially connected.
  Now consider the string representation obtained for ${\cal A}$,
  $[(S'_i)_{i<kn}]$.  By Lemma~\ref{lem:ICDFA1} it satisfies
  \textbf{R1} and \textbf{R2}.  It is easy to see that this
  representation is the same as $[(S_i)_{i<kn}]$.  By \textbf{R1},
  $S_0=S'_0$.  Suppose that $(\forall i<j)(S_i=S'_i)$.  Now we prove
  that $S'_{j}= S_{j}$.  By \textbf{R1}, either $S_j\in\{S_i\mid
  i<j\}$ or $S_j=\max\{S_i\mid i<j\}+1$.  In the first case, there
  exists $l<j$ such that $S_j=S_l$ and, by induction hypothesis,
  $S_l=S'_l$, thus

Analogously, by
  \textbf{R1}, in the second case we have that 
$$S'_j=\max\{S_i\mid i<j\}+1=S_j$$.
\end{proof}
\begin{thm}
\label{thm:iso}
There is a one-to-one mapping between strings $[(S_i)_{i<kn}]$ with
$S_i\in[n-1]_0$ satisfying \textbf{R1} and \textbf{R2}, and the
non-isomorphic \ICDFAE{}'s with $n$ states, over an alphabet $\Sigma$
of size $k$.
\end{thm}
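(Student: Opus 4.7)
The plan is to glue Lemmas~\ref{lem:ICDFA1} and~\ref{lem:ICDFA2} together by exhibiting two explicit maps and checking they are mutually inverse modulo isomorphism. Let $\mathcal{S}_{n,k}$ denote the set of strings $[(S_i)_{i<kn}]$ with $S_i\in[n-1]_0$ satisfying \textbf{R1} and \textbf{R2}, and let $\mathcal{I}_{n,k}$ denote the set of isomorphism classes of \ICDFAE{}'s with $n$ states over a $k$-letter alphabet. I would define $\Phi:\mathcal{S}_{n,k}\to\mathcal{I}_{n,k}$ by sending a string to the isomorphism class of the automaton on state set $[n-1]_0$, initial state~$0$, and transitions $\delta(\floor{i/k},\sigma_{i\bmod k})=S_i$; Lemma~\ref{lem:ICDFA2} guarantees this is a bona fide \ICDFAE{}. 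Conversely, I would define $\Psi:\mathcal{I}_{n,k}\to\mathcal{S}_{n,k}$ via the canonical relabelling procedure described at the beginning of Section~\ref{sec:repr-towards-norm}: set the label of $q_0$ to $0$, and then process states in order of their assigned labels, scanning outgoing transitions in the fixed alphabet order and giving each newly discovered state the next unused label. Reading off the sequence of transition targets then yields a string in $\mathcal{S}_{n,k}$ by Lemma~\ref{lem:ICDFA1}.

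Next I would verify that $\Psi$ is well defined on isomorphism classes. Given an isomorphism $f:\mathcal{A}\to\mathcal{A}'$, an induction on the step of the canonical relabelling shows that if a state $q$ of $\mathcal{A}$ receives label $m$, then $f(q)$ receives the same label $m$ in $\mathcal{A}'$: the two procedures discover states in identical order because $f$ is a bijection fixing the initial state and commuting with every transition. Consequently the two strings coincide.

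It then remains to check the two composition identities. For $\Psi\circ\Phi=\mathrm{id}_{\mathcal{S}_{n,k}}$, the argument is already present at the end of the proof of Lemma~\ref{lem:ICDFA2}: an induction on $j$, splitting on whether $S_j$ is an old or fresh label (as dictated by \textbf{R1}), shows that the representation $[(S'_i)_{i<kn}]$ obtained from the constructed automaton agrees symbol by symbol with the starting string. For $\Phi\circ\Psi=\mathrm{id}_{\mathcal{I}_{n,k}}$, the canonical relabelling itself provides the required isomorphism from $\mathcal{A}$ to the automaton reconstructed from $\Psi(\mathcal{A})$, since both automata then have state set $[n-1]_0$, initial state $0$, and identical transition tables by construction. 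I expect the main obstacle to be the well-definedness of $\Psi$ on isomorphism classes; once that is granted, both composition identities follow without further combinatorial work.
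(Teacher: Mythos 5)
Your proposal is correct and follows essentially the same route as the paper: the well-definedness of $\Psi$ on isomorphism classes is exactly the paper's induction showing $S_j=S'_j$ and $f(S_j)=S'_j$, and your two composition identities are what the paper extracts from the final part of the proof of Lemma~\ref{lem:ICDFA2}. The only difference is organizational --- you package the argument as a pair of explicitly mutually inverse maps, which is arguably cleaner but involves no new mathematical content.
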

\begin{proof}
  Let $(Q,\Sigma,\delta,q_0)$ and $(Q',\Sigma,\delta',q'_0)$ be two
  \ICDFAE{}'s and $[(S_i)_{i<kn}]$ and $[(S'_i)_{i<kn}]$ their
  representing strings. By Lemma~\ref{lem:ICDFA1}, these strings
  satisfy \textbf{R1} and \textbf{R2}. Suppose that
  $f:Q\longrightarrow Q'$ is an isomorphism between the
  \ICDFAE{}'s. Then $0=q_0$ and $f(q_0)=q'_0=0$. Either
  $S_0=\delta(q_0,\sigma_0)=q_0=0$
  or $S_0=\delta(s_0,\sigma_0)=1$ (by \textbf{R1}).
  \begin{enumerate}[i)]
  \item If $S_0=0$ then $f(q_0)=f(\delta(q_0,\sigma_0)) =
    \delta'(q'_0,\sigma_0)=S'_0=0,$ because $\delta(q_0,\sigma_0)=q_0$
    implies $\delta'(f(q_0),\sigma_0)=f(q_0)$.
  \item  If $S_0=1$ then
    $f(1)=\delta'(q'_0,\sigma_0)=S'_0\neq 0,$ thus $S'_0=1$, again by
    \textbf{R1}. 
  \end{enumerate}
  Supposing that $(\forall i<j)(S_i=S'_i\wedge f(S_i)=S'_i)$ we need
  to prove that $S_j=S'_j\wedge f(S_j)=S'_j$.  Trivially we have
  $\{S_i\mid i<j\}=\{S'_i\mid i<j\}$. We know that $S_j =
  \delta\djk{j}{k}$, and by \textbf{R2} there exists $l<j$ such that
  $\floor{j/k}= S_l$ thus $f(\floor{j/k})=f(S_l)=S_l=S'_l=\floor{j/k}$
  by induction hypothesis. We have
$$S'_j \;=\;\delta'( \floor{j/k}, \sigma_{j\bmod{k}}) \;=\;\delta'(
f(\floor{j/k}), \sigma_{j\bmod{k}})\;=\; f(\delta\djk{j}{k})\;=\;
f(S_j).
  $$
  By \textbf{R1}, either $S_j\in\{S_i\mid i<j\}$ or $S_j=\max\{S_i\mid
  i<j\}+1$.
  \begin{enumerate}[i)]
  \item If $S_j\in\{S_i\mid i<j\}$ then there exists $l<j$ such that
    $S_j=S_l$ and $S_l=S'_l$\lixo{=f(S_l)}. Then
  \begin{gather*}
     \delta\djk{j}{k} \;=\; \delta\djk{l}{k} \;\;
\Rightarrow\;\;f(\delta\djk{j}{k})\;=\;f(\delta\djk{l}{k})\\
\Leftrightarrow\;\; \delta'\djk{j}{k}\;=\;\;\delta'\djk{l}{k}\lixo{
  \;\; \Leftrightarrow\; S'_j=S'_l\;=\;S_l=S_j.}
  \end{gather*}
Thus $S_j=S_l$ implies $S'_j=S'_l$, and so $S'_j=S_j$.
\item If $S_j=\max\{S_i\mid i<j\}+1$ then $S'_j\not\in\{S_i\mid i<j\}$
  because if there exists a $l<j$ such that $S'_l=S'_j$ by the same
  reason as before $S_j\in\{S_i\mid i<j\}$.  Thus, by \textbf{R1}
  $S'_j=\max\{S_i\mid i<j\}+1 = S_j$.
  \end{enumerate}

  Conversely, by Lemma~\ref{lem:ICDFA2}, we have that each string
  represents a \ICDFAE{} up to a compatible renaming of states, i.e.,
  if two \ICDFAE{}'s are represented by the same string, that
  representation defines a isomorphism between them.
\end{proof}
These string representations lead to a normal representation for
\ICDFAE{}'s. For each of them, if we add a sequence of \emph{final
  states}, we obtain a \textbf{normal form} for \ICDFA{}'s.

\section{Generating automata}
\label{sec:generating-automata}
Normal representations for \ICDFAE{}'s (as presented above) can be
used as compact computer representations for that kind of objects, but
even though rules \textbf{R1} and \textbf{R2} are quit simple, it is
not
evident how to write an enumerative algorithm in an efficient way.
In a string representing an \ICDFAE{} with $n$ states over an alphabet
of $k$ symbols, $[(S_i)_{i<kn}]$, let $(f_j)_{0<j<n}$ be the sequence
of indexes of the first occurrence of each state label~$j$. That those
indexes exist is a direct consequence of the way the string is
constructed. Now consider 
$b_1=f_1$, $b_j=f_j-f_{j-1}$, for $2\leq
j\leq n-1$ and $b_n=kn-f_{n-1}+1$. Note that
$\sum_{l=1}^{j}b_l=f_j$, for $2\leq j\leq n-1$.

Note that
$$  \sum_{l=1}^{j}b_l=f_j\text{ , for } j\in[n-1].$$
It is easy to see that
\begin{enumerate}
\item Rule \textbf{R1} simply states that
  \begin{equation*}
    (\forall 2 \leq j \leq n-1)(b_j>0).\tag{\textbf{G1}}\label{eq:6}
  \end{equation*}
\item Rule
  \textbf{R2} establishes that
  \begin{equation*}
    (\forall m\in[n-1])(f_m< km).\tag{\textbf{G2}}\label{eq:5}
\end{equation*}
\end{enumerate}
To generate all the automata, for each allowed sequence of
$(b_j)_{0<j<n}$ we can generate all the remaining symbols $S_i$ (those
with $i\not\in \{f_j\mid 0<j<n\}$) according to the following rules:
\begin{gather}
  i<b_1\;\Rightarrow\;S_i=0;\tag{\textbf{G3}}\\
  (\forall j\in[n-2]) (f_j<i<f_{j+1} \;\Rightarrow \;S_i\in[j]_0);
  \tag{\textbf{G4}}\\
  i>f_{n-1}\;\Rightarrow\; S_i\in[n-1]_0.\tag{\textbf{G5}}
\end{gather}

\section{Enumeration of \ICDFA{}'s}
\label{sec:enumeration}
In this section we obtain a formula $B_k(n)$ for the number of strings
$[(S_i)_{i<kn}]$ representing \ICDFAE{}'s with $n$ states over an
alphabet of $k$ symbols. Although it is already known a formula for
the number of non-isomorphic \ICDFAE{}'s, we think that our method is
new. Liskovets~\cite{liskovets69} and, independently,
Robinson~\cite{robinson85:_count} gave for that number the formula
$H_k(n)=\frac{h_k(n)}{(n-1)!}$ where $h_k(1)=1$ and for $n>1$
\begin{equation}
  \label{eq:hk}
h_k(n)=n^{kn}-\sum_{1\leq j<
  n}\binom{n-1}{j-1}n^{k(n-j)}h_k(j)
\end{equation}
Note that $n^{kn}$ is the number of transition functions, from which
we subtract the number of them that have $n-1$, $n-2$,\ldots,$1$
states not accessible from the initial state. And then, we may divide
by $(n-1)!$, as the names of the remaining states (except the initial)
are irrelevant. Reciprocally, the formula we will derive ($B_k(n)$) is
a direct positive summation.

First, let us consider the set of strings $[(S_i)_{i<kn}]$ with
$S_i\in[n-1]_0$ and satisfying only rule \textbf{R1}. The number of
these strings gives an upper bound for $B_k(n)$. This set can be given
by $A_{n}\cap[n-1]_0^{kn}$, where for $c>0$,
\begin{equation}
  \label{eq:an}
A_c=L(0^\star
+\sum_{i=1}^{c-1}0^\star\prod_{j=1}^{i}j(0+\cdots+j)^\star).
\end{equation}
These languages belong to a family of languages $L_c$ presented by
Moreira and Reis~\cite{moreira05} and that represent partitions of
$[n]$ with no more than $c\geq 1$ parts, i.e.,
\begin{equation}
  L_c=L(\sum_{i=1}^{c}\prod_{j=1}^{i}j(1+\cdots+j)^\star).\label{eq:7}
\end{equation}
We have that $\rho_{A_c}(n)=\rho_{L_c}(n+1)$ and that
$\rho_{L_c}(n)=\sum_{i=1}^{c}{S(n,i)}$, where $S(n,i)$ are Stirling
numbers of second kind. So we get that the number of strings of length
$kn$ that are in $A_n$, is $\rho_{A_n}(kn)=\sum_{i=1}^{n}{S(kn+1,i)}$.
We have the proposition,

\begin{prop}
 For all $n,\; k\geq 1$, $B_k(n)\leq\sum_{i=1}^{n}{S(kn+1,i)}$.
\end{prop}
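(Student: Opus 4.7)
The plan is to observe that every representing string of an \ICDFAE{} must satisfy \textbf{R1} (Lemma~\ref{lem:ICDFA1}), so the strings counted by $B_k(n)$ are a subset of the strings of length $kn$ over $[n-1]_0$ that satisfy \textbf{R1} alone. Thus an upper bound on $B_k(n)$ is obtained by counting this larger set, and the result will follow by identifying this count with $\rho_{A_n}(kn)$ and applying the two identities the author has already stated, namely $\rho_{A_c}(n)=\rho_{L_c}(n+1)$ and $\rho_{L_c}(n)=\sum_{i=1}^{c}S(n,i)$.

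First I would argue that a string $[(S_i)_{i<kn}]$ with $S_i\in[n-1]_0$ satisfies \textbf{R1} exactly when it lies in $A_n$. The expression for $A_c$ forces any occurrence of a symbol $m>1$ to appear only after a prior occurrence of $m-1$ (because the factor $j$ is read before any letter in $(0+\cdots+j)^\star$ can use the value $j$), which is precisely \textbf{R1}; conversely, an \textbf{R1}-string of length $kn$ either consists only of $0$'s (matched by the summand $0^\star$) or contains a largest value $i\in[n-1]$, and can be parsed uniquely as $0^\star\prod_{j=1}^{i}j(0+\cdots+j)^\star$ by locating the first occurrences of $1,2,\ldots,i$. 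Since the letters appearing in any word of $A_n$ lie in $\{0,1,\ldots,n-1\}=[n-1]_0$ by construction, the intersection $A_n\cap[n-1]_0^{kn}$ coincides with the set of length-$kn$ words in $A_n$, and its cardinality is $\rho_{A_n}(kn)$.

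Then I would conclude with a short chain of equalities, using the two claimed density formulas:
\begin{equation*}
B_k(n)\;\leq\;\rho_{A_n}(kn)\;=\;\rho_{L_n}(kn+1)\;=\;\sum_{i=1}^{n}S(kn+1,i).
\end{equation*}

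The only substantive point is the identification of \textbf{R1}-strings with the language $A_n$; once that is in hand, the remainder is bookkeeping against previously stated results. Since the identities $\rho_{A_c}(n)=\rho_{L_c}(n+1)$ and $\rho_{L_c}(n)=\sum_{i=1}^{c}S(n,i)$ are cited from~\cite{moreira05} and correspond to the classical bijection between restricted growth strings of length $n$ and set partitions of $[n]$, I would simply invoke them rather than reprove them here.
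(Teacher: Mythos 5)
Your proposal is correct and follows exactly the paper's route: relax to strings satisfying \textbf{R1} alone, identify these with $A_n\cap[n-1]_0^{kn}$, and apply the densities $\rho_{A_c}(n)=\rho_{L_c}(n+1)$ and $\rho_{L_c}(n)=\sum_{i=1}^{c}S(n,i)$ from~\cite{moreira05}. The only difference is that you spell out the parsing argument identifying \textbf{R1}-strings with $A_n$, which the paper merely asserts.
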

For $n=3$ and $k=2$, $B_2(3)\leq 365$. For $k=2$, Bassino and
Nicaud~\cite{bassino:_enumer} presented a better upper bound, namely
that $B_2(n)\leq nS(2n,n)$.

Now let us consider only the rule \textbf{R2}. This rule can be formulated
as 
\begin{gather}
\bigwedge_{m=1}^{n-1}\bigvee_{j=0}^{km-1} S_j=m.
\end{gather}
From this formula it is easy to see that the strings $[(S_i)_{i<kn}]$
with $S_i\in[n-1]_0$ and satisfying only rule \textbf{R2} can be
represented by the regular expression
\begin{equation}
  \label{eq:strr2}
  \bigcap_{m=1}^{n-1} \sum_{j=0}^{km-2} (0+\cdots+(m-1))^jm(0+\cdots+
  (n-1))^{kn-j-1}, 
\end{equation}
\noindent where we extended the operators of regular expressions to
intersection.

Now in order to simultaneously satisfy rules \textbf{R1} and
\textbf{R2}, in formula~(\ref{eq:strr2}), the first occurence of $m$
must precede the one of $m-1$, for $2\leq m\leq n-1$. These positions
are exactly the sequence $(f_j)_{0<j<n}$ defined in
Section~\ref{sec:generating-automata}. Given these positions and
considering the correspondent sequence $(b_j)_{0<j<n}$ we obtain the
regular expression:
\begin{equation*}
  \label{eq:prodreg}
  \left (\prod_{j=1}^{n-1} (0+\ldots+(j-1))^{b_j-1}j
  \right)(0+\cdots+(n-1))^{b_n-1}, 
\end{equation*}

\noindent and we must consider the possible values of $(b_j)_{0<j<n}$,
constrained to~\textbf{G1} and \textbf{G2}:
\begin{equation*}
  \label{eq:regexp}
  \sum_{b_1=1}^{k} \sum_{b_2=1}^{2k-b_1}
    \cdots  \sum_{b_{n-1}=1}^{k(n-1)-\sum_{l=1}^{n-2}b_l} 
  \left(\prod_{j=1}^{n-1} (0+\ldots+(j-1))^{b_j-1}j
  \right)(0+\cdots+(n-1))^{b_n-1}
\end{equation*}


For $n=3$ and $k=2$ we have
\begin{equation*}
  \label{eq:reg23}
(01+1(0+1))((0+1)2+2(0+1+2))(0+1+2)^2+12(0+1+2)^4,
\end{equation*}
\noindent and the number of these strings is $(1+2)((2+3)3^2)+3^4= 216$.


For each sequence $(b_j)_{0<j<n}$ the number of strings
$[(S_i)_{i<kn}]$ with $S_i\in[n-1]_0$ and satisfying \textbf{R1} and
\textbf{R2} is 
\begin{equation}
  \label{eq:enu1}
  \prod_{j=1}^n j^{b_j-1},
\end{equation}
\noindent a direct consequence of rules \textbf{G3}, \textbf{G4} and
\textbf{G5}.  And then we must take the sums over all $b_j$
constrained to rules \textbf{G1} and \textbf{G2}.
\begin{thm}
  \label{thm:bkn}
 We have
  \begin{equation}
  \label{eq:bkn}
  B_k(n)=  \sum_{b_1=1}^{k} \sum_{b_2=1}^{2k-b_1}  \sum_{b_3=1}^{3k-b_1-b_2}
  \cdots  \sum_{b_{n-1}=1}^{k(n-1)-\sum_{l=1}^{n-2}b_l} 
  \prod_{j=1}^{n} j^{b_j-1}.
\end{equation}
  \end{thm}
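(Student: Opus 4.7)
The plan is to fix a sequence $(b_j)_{1\le j\le n-1}$ satisfying the constraints laid out by rules \textbf{G1} and \textbf{G2}, count the number of strings whose first-occurrence pattern matches this sequence, and then sum over all admissible $(b_j)$. By Theorem~\ref{thm:iso}, the quantity $B_k(n)$ equals the number of length-$kn$ strings over $[n-1]_0$ that satisfy \textbf{R1} and \textbf{R2}, so I just need to enumerate those strings. First I would partition this set according to the tuple $(b_j)_{1\le j\le n-1}$ derived from the first-occurrence positions $(f_j)_{1\le j\le n-1}$; two strings fall in the same class exactly when their tuples agree, and the tuple completely determines where the forced symbols $1,2,\dots,n-1$ are placed.

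Next I would translate the constraints \textbf{G1} and \textbf{G2} into the summation limits. Rule \textbf{G1} gives $b_j\ge 1$ for $2\le j\le n-1$, which fixes the lower limit of each inner sum at $1$. Rule \textbf{G2} states $f_m<km$ for every $m\in[n-1]$; writing $f_m=\sum_{l=1}^{m}b_l$, this becomes $b_m\le km-\sum_{l=1}^{m-1}b_l$, which is exactly the upper limit of the $m$-th sum in the displayed formula. For $m=1$ one recovers the outer range $1\le b_1\le k$.

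It then remains to count, for a fixed admissible $(b_j)$, the number of completions. The positions $f_1,\dots,f_{n-1}$ are forced (they carry the first occurrences of $1,\dots,n-1$). By \textbf{G3} the positions strictly before $f_1$ must all equal $0$, contributing a factor $1^{b_1-1}$. By \textbf{G4}, for $2\le j\le n-1$, the positions strictly between $f_{j-1}$ and $f_j$ are exactly $b_j-1$ in number and each may independently be any value in $[j-1]_0$, contributing $j^{b_j-1}$. Finally, by \textbf{G5}, the tail after $f_{n-1}$ has length $b_n-1$ (with $b_n=kn-f_{n-1}+1$) and each of its positions may be any value in $[n-1]_0$, contributing $n^{b_n-1}$. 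Multiplying these independent contributions yields $\prod_{j=1}^{n}j^{b_j-1}$, exactly the summand in~\eqref{eq:bkn}.

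Most of the real work is already contained in Lemmas~\ref{lem:ICDFA1} and~\ref{lem:ICDFA2}, Theorem~\ref{thm:iso}, and the translation to rules \textbf{G1}--\textbf{G5}; what is left is combinatorial bookkeeping. The only mild obstacle is making sure the off-by-one indexing is consistent, in particular that the definition $b_n=kn-f_{n-1}+1$ correctly gives $b_n-1$ free positions in the tail so that the clean product formula $\prod_{j=1}^{n}j^{b_j-1}$ holds uniformly, and that the upper summation bound $k(n-1)-\sum_{l=1}^{n-2}b_l$ on $b_{n-1}$ encodes \textbf{G2} for $m=n-1$. Once these alignments are verified, summing over all admissible $(b_j)$ yields~\eqref{eq:bkn} immediately.
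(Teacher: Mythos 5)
Your proposal is correct and follows essentially the same route as the paper: the paper's own proof is just the one-line remark that the theorem is an immediate consequence of rules \textbf{G1}--\textbf{G5}, with the substance (the partition by the tuple $(b_j)$, the per-class count $\prod_{j=1}^{n}j^{b_j-1}$ of equation~(\ref{eq:enu1}), and the translation of \textbf{G1}/\textbf{G2} into the summation limits) contained in the surrounding discussion of Sections~\ref{sec:generating-automata} and~\ref{sec:enumeration}, which is exactly what you spell out. Your attention to the off-by-one bookkeeping (e.g.\ that the bound $f_m\le km$ implicit in the summation limits, rather than the strict inequality as written in \textbf{G2}, is what matches \textbf{R2}) is warranted, as the paper itself is slightly inconsistent on this point while the final formula is correct.
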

\begin{proof}
  It is an immediate consequence of rules \textbf{G1} to \textbf{G5}.
\end{proof}

\begin{cor}\label{cor:icdfas}
  The number of non-isomorphic \ICDFA{}'s  with $n$ states over an
alphabet of $k$ symbols is 
$2^nB_k(n)$.
\end{cor}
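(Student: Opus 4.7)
The plan is to combine Theorem~\ref{thm:bkn} (which counts representing strings) with Theorem~\ref{thm:iso} (which matches those strings to non-isomorphic \ICDFAE{}'s) and then account for the final states. By Theorem~\ref{thm:bkn}, $B_k(n)$ counts exactly the strings $[(S_i)_{i<kn}]$ over $[n-1]_0$ that satisfy \textbf{R1} and \textbf{R2}, and by Theorem~\ref{thm:iso} these are in bijection with the non-isomorphic \ICDFAE{}'s of $n$ states over an alphabet of size $k$. So the first step is simply to record that there are $B_k(n)$ non-isomorphic \ICDFAE{} structures.

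Next, I would argue that turning a structure into a full \ICDFA{} by choosing $F\subseteq Q$ multiplies the count by exactly $2^n$ without introducing isomorphism collapses. For this the key observation is that every \ICDFAE{} has a trivial automorphism group. Concretely, if $f:Q\to Q$ is an automorphism of an \ICDFAE{} whose representing string is $[(S_i)_{i<kn}]$, then applying the induction in the proof of Theorem~\ref{thm:iso} with $S'_i=S_i$ forces $f(q_0)=q_0=0$ and $f(S_j)=S_j$ for every $j$; since \textbf{R2} guarantees that every state label in $[n-1]$ occurs as some $S_j$, this yields $f=\mathrm{id}_Q$.

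From this triviality of automorphisms it follows that, for a fixed \ICDFAE{} structure, any two distinct choices $F\neq F'\subseteq Q$ produce non-isomorphic \ICDFA{}'s: an isomorphism of \ICDFA{}'s is in particular an isomorphism of the underlying structures, hence the identity, and the identity sends $F$ to $F\neq F'$. Conversely, distinct \ICDFAE{} structures remain distinct after any choice of final states, since an \ICDFA{} isomorphism restricts to a structural isomorphism. Thus the $B_k(n)$ structures, each decorated in $2^n$ ways, yield exactly $2^n B_k(n)$ non-isomorphic \ICDFA{}'s.

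The only subtle point, and the one I would expect to be the main obstacle, is the rigidity argument for \ICDFAE{}'s; everything else is bookkeeping. Fortunately that rigidity is almost immediate from the inductive argument already carried out in the proof of Theorem~\ref{thm:iso}, so the corollary follows with essentially no new computation.
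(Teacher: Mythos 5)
Your proof is correct and follows the same route as the paper, which simply cites Theorems~\ref{thm:iso} and~\ref{thm:bkn} and says ``considering the possible sets of final states.'' In fact you supply the one detail the paper leaves implicit --- the rigidity of \ICDFAE{}'s (trivial automorphism group, extracted from the induction in Theorem~\ref{thm:iso} together with \textbf{R2}), which is exactly what is needed to justify that the factor $2^n$ introduces no isomorphism collapses --- so your write-up is, if anything, more complete than the paper's.
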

\begin{proof}
  By Theorems~\ref{thm:iso} and~\ref{thm:bkn} and considering the
  possible sets of final states.
\end{proof}

\section{Conclusion}
\label{sec:implementation}
The method described in Section~\ref{sec:generating-automata} was
implemented
and used to generate all \ICDFAE{}'s for $k=2$ and $n<10$, and $k=3$
and $n<7$.  The time complexity of the program is linear in the number
of automata and took about a week to generate all the referred
\ICDFAE{}'s, in a PPC G4 1.5MHz.

One of the advantage of this method is that only the allowed strings
are computed so it is not a \emph{generate-and-test} algorithm and
because automata are generated in lexicographic order it is easy to
generate them as needed for consumption by another algorithm.

If an \ICDFA{} with $n$ states accepts a finite language then there
exists a topological order of its states such that
$\delta(i,\sigma)>i$, for all $i<n-1$ and $\sigma\in \Sigma$. But the
order we used for string representations is not a topological
order. So we can not determine directly from the string if the
accepted language is finite, as was done by
Domaratzki~\cite{domaratzki04:_combin_inter_gener_genoc_number} only
for finite languages. Although the formula $B_k(n)$ is quite similar
to the one obtained
in~\cite{domaratzki04:_combin_inter_gener_genoc_number} for an upper
bound of the number of finite languages, the meaning of the parameters
$(b_j)$ are not directely related.

\section{Acknowledgements}
We thank the anonymous referees for their
comments that helped to improve this~paper.

\end{document}